\documentclass[conference,a4paper]{IEEEtran}

\addtolength{\topmargin}{9mm}

\usepackage{url}
\usepackage{ifthen}
\usepackage{graphicx}
\usepackage{mathtools}
\usepackage{etoolbox}
\usepackage{lipsum}
\usepackage{amsfonts}
\usepackage{cite}
\usepackage{amsmath,amsthm}
\usepackage{amssymb}
\PassOptionsToPackage{bookmarks={false}}{hyperref}
\setlength{\textfloatsep}{0.05cm}

\let\bbordermatrix\bordermatrix
\patchcmd{\bbordermatrix}{8.75}{4.75}{}{}
\patchcmd{\bbordermatrix}{\left(}{\left[}{}{}
\patchcmd{\bbordermatrix}{\right)}{\right]}{}{}

\newtheorem{theorem}{Theorem}[section]

\newtheorem{definition}[theorem]{Definition}

\newtheorem{proposition}[theorem]{Proposition}


%
%
%

%
%
\newcommand{\sr}{\stackrel}

\newcommand{\rar}{\rightarrow}

\newcommand{\tri}{\sr{\triangle}{=}}

%
%

%
%
%
\newcommand{\be}{\begin{equation}}
\newcommand{\ee}{\end{equation}}
\newcommand{\bea}{\begin{eqnarray}}
\newcommand{\eea}{\end{eqnarray}}
\newcommand{\bes}{\begin{eqnarray*}}
\newcommand{\ees}{\end{eqnarray*}}
\newcommand{\beae}{\begin{IEEEeqnarray}{rCl}}
\newcommand{\eeae}{\end{IEEEeqnarray}}

%
%
\newcommand{\bi}{\begin{itemize}}
\newcommand{\ei}{\end{itemize}}
\newcommand{\ben}{\begin{enumerate}}
\newcommand{\een}{\end{enumerate}}
%
%

%
%
\newcommand{\bp}{\begin{problem}}
\newcommand{\ep}{\end{problem}}

\newcommand{\hst}{\hspace{.2in}}

%
%

%

\begin{document}

\title{Source-Channel Matching for Sources with Memory}

\author{
  \IEEEauthorblockN{Christos Kourtellaris, Photios A.~Stavrou, Charalambos D.~Charalambous}
  \IEEEauthorblockA{Dep. of Electrical \& Computer Engineering, University of Cyprus, Nicosia, Cyprus\\
    Email: \{kourtellaris.christos, stavrou.fotios, chadcha\}@ucy.ac.cy}

}



\maketitle

\begin{abstract}
To be considered for an IEEE Jack Keil Wolf ISIT Student Paper Award. In this paper we  analyze the probabilistic matching of
sources with memory to  channels with memory so that  symbol-by-symbol code with memory without anticipation
are optimal, with respect to an average distortion and excess distortion
probability. We show achievability of such a symbol-by-symbol code with memory without anticipation, and we show matching for   the Binary Symmetric Markov source (BSMS(p)) over a first-order symmetric channel with a cost constraint.
\end{abstract}
\section{Introduction}
\par In this paper we address the problem of Joint Source-Channel Coding  JSCC based  on symbol-by-symbol code transmission with memory without anticipation. Thus, at each instant of time $i$, we   impose  real-time transmission constrains on  the encoder
and decoder to process samples independently, with memory on past symbols, and without anticipation with respect to symbols occurring future times $j>i$. The aim is to
match probabilistically the source to a channel, and evaluate its performance with respect to excess distortion probability.
\par For memoryless sources and channels, necessary
and sufficient conditions for symbol-by-symbol transmission are given in \cite{gastpar2003} (see also \cite{kverdu}). However, extending these results to sources with  memory is not a trivial task for the following two reasons. i) The optimal reproduction distribution of classical Rate Distortion Function (RDF), used during the realization procedure, to match the source to a channel is, in general noncausal (anticipative on future symbols); ii) the solution to the RDF is often unknown.
\par In this paper we consider a nonanticipative information RDF which
is realizable in the above sense, and we proceed to obtain the  expression of  the optimal causal
reproduction distribution.
1) We prove under certain conditions involving the nonanticipative information RDF, and the capacity
of certain channels with memory and feedback, that symbol-by-symbol code with memory without anticipation is achievable.
2) we consider a BSMS(p) and we show that matching  is possible over a symmetric channel with memory and cost constraint, 3) we evaluate the excess distortion probability and we show that convergence to zero, as the  number of channel uses increases, establishing achievability.
\section{Symbol-by-Symbol codes with Memory Without Anticipation}
\label{sbs}
\par Let ${\mathbb{N}}\tri\{0,1,\dots\}$, $\mathbb{N}^n\tri\{0,1,\dots,n\}$. The spaces ${\cal X},{\cal A},{\cal B},{\cal Y}$ denote the source output, channel input,
channel output, and decoder output alphabets, respectively, which are assumed to be complete separable metric spaces (Polish spaces)  to avoid excluding continuous alphabets. We define their product spaces by  ${\cal X}_{0,n}\tri\times_{i=0}^{n}{\cal X}$,
${\cal A}_{0,n}\tri\times_{i=0}^{n}{\cal A}$, ${\cal B}_{0,n}\tri\times_{i=0}^{n}{\cal B}$,
${\cal Y}_{0,n}\tri\times_{i=0}^{n}{\cal Y}$, and associate them with their measurable spaces. Let
$x^n\tri\{x_0, x_1,\dots, x^n\}\in{\cal X}_{0,n}$ denote the source sequence
of length $n+1$, and similarly for the rest of the blocks. 
 Next, we introduce the various distributions..

\begin{definition}
\label{source}
(Source) The source is a sequence of conditional
distributions 
defined by
\vspace{-0.3cm}
\begin{align}
P_{X^n}(d{x}^n)\tri\otimes_{i=0}^{n} P_{X_i|X^{i-1}}(d{x}_i|x^{i-1}).\nonumber
\end{align}
\end{definition}
\vspace{-0.2cm}

\begin{figure}
\begin{center}
\includegraphics[bb= -10 23 400 200, scale=0.5]{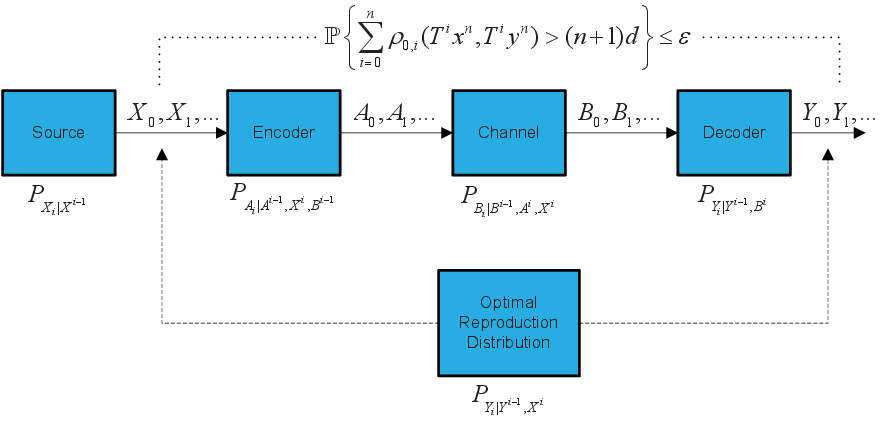}
\caption{Communication scheme with feedback.}
 \label{cs}
\end{center}
\end{figure}

\begin{definition}
\label{encoder}
(Encoder) The encoder is a sequence of conditional
distributions 
defined by
\begin{align}
&{\overrightarrow P}_{A^n|B^{n-1},X^n}(d{a}^n|b^{n-1},x^n)\nonumber\\
&\tri\otimes_{i=0}^{n}P_{A_i|A^{i-1},B^{i-1},X^i}(d{a}_i|a^{i-1},b^{i-1},x^i). \nonumber
\end{align}
\end{definition}
\vspace{-0.1cm}
Thus, the encoder is nonanticipative in the sense that at each time $i\in{\mathbb{N}}^n$, $P_{A_i|A^{i-1},B^{i-1},X^i}(d{a}_i|a^{i-1},b^{i-1},x^i)$ is a measurable function of past and present symbols $x^i\in{\cal X}_{0,i}$ and past symbols $a^{i-1}\in{\cal A}_{0,i-1}, b^{i-1}\in{\cal B}_{0,i-1}$.
\begin{definition}
\label{channel}
(Channel) The channel is a sequence of conditional
distributions 
defined by
\begin{align}
&{\overrightarrow P}_{B^n|A^{n},X^n}(d{b}^n|a^{n},x^n)\nonumber\\
&\tri\otimes_{i=0}^{n}P_{B_i|B^{i-1},A^{i},X^i}(d{b}_i|b^{i-1},a^{i},x^i).\nonumber
\end{align}
\end{definition}

Thus the channel has memory, feedback and it is nonanticipative with respect to  the source sequence.

\begin{definition}
\label{decoder}
(Decoder) The decoder is a sequence of conditional
distributions 
defined by
\begin{align}
{\overrightarrow P}_{Y^n|B^{n}}(d{y}^n|b^{n})\tri
\otimes_{i=0}^{n}P_{Y_i|Y^{i-1},B^i}(d{y}_i|y^{i-1},b^i).\nonumber
\end{align}
\end{definition}

Definitions~\ref{source}-\ref{decoder} are  general, since they allow memory and feedback  without anticipation, hence we call the source-channel code symbol-by-symbol code with memory without anticipation.
Given the source, encoder, channel,  decoder, we can define uniquely
the joint measure  by
\begin{align}
& P_{X^n,A^n,B^n,Y^n}(d{x}^n,d{a}^n,d{b}^n,d{y}^n)\nonumber\\
& =\otimes_{i=0}^{n}P_{Y_i|Y^{i-1},B^i}(d{y}_i|y^{i-1},b^i)\nonumber\\
&\otimes P_{B_i|B^{i-1},A^{i},X^i}(d{b}_i|b^{i-1},a^{i},x^i)\nonumber\\
&  \otimes P_{A_i|A^{i-1},B^{i-1},X^i}(d{a}_i|a^{i-1},b^{i-1},x^i)\otimes P_{X_i|X^{i-1}}(d{x}_i|x^{i-1}).  \label{joint}
\end{align}
The previous equation  implies the  Markov Chains (MCs):
\begin{align}
& (A^{i-1},B^{i-1},Y^{i-1})\leftrightarrow X^{i-1}\leftrightarrow X_i, \ \ \forall i\in\mathbb{N}^n \label{mc1} \\
& Y^{i-1}\leftrightarrow (A^{i-1},B^{i-1},X^{i})\leftrightarrow A_i, \ \ \forall i\in\mathbb{N}^n \label{mc2}\\
& Y^{i-1}\leftrightarrow (A^{i},B^{i-1},X^{i})\leftrightarrow B_i, \ \ \forall i\in\mathbb{N}^n \label{mc3}\\
& (A^{i},X^{i})\leftrightarrow (B^i, Y^{i-1})\leftrightarrow Y_i, \ \ \forall i\in\mathbb{N}^n. \label{mc4}
\end{align}
\par The distortion between the source and its reproduction is a measurable function $d_{0,n}:{\cal X}_{0,n}\times{\cal Y}_{0,n}\mapsto [0,\infty)$,
and the cost of transmitting  symbols over
the channel is a measurable function $c_{0,n}: {\cal A}_{0,n} \times {\cal Y}_{0,n-1} \mapsto [0,\infty)$ defined by
\begin{align}
d_{0,n}(x^n,y^n)\tri &\sum_{i=0}^{n}{\rho}_{0,i}({T}^i{x^n},T^i{y^n})\nonumber\\
c_{0,n}(a^n,b^{n-1}) \tri &\sum_{i=0}^{n}
{\gamma}_{0,i}(a^i,b^{i-1}),\nonumber
\end{align}
where $({T}^i{x^n},{T}^i{y^n})$ are the shift operations on $(x^n,y^n)$, respectively.
For a single letter distortion function we take ${\rho}_{0,i}(T^ix^n,T^iy^n)={\rho}(x_i,y_i)$.
 Next, we state the definition of a symbol-by-symbol code (with memory without anticipation).

\begin{definition}
\label{sbsc}
(Symbol-by-Symbol Code)
 An (n,d,$\epsilon$,P) symbol-by-symbol
code for (${\cal X}_{0,n}, {\cal A}_{0,n}, {\cal B}_{0,n}, {\cal Y}_{0,n}, P_{X^n},
{\overrightarrow P}_{B^n|A^n,X^n}, d_{0,n}, c_{0,n}$) is a  code
$\{P_{A_i|A^{i-1}, B^{i-1},X^i}(\cdot|\cdot):\forall i \in\mathbb{N}^n\}$, $\{P_{Y_i|Y^{i-1},B^i}(\cdot|\cdot):\forall i \in\mathbb{N}^n\}$
with excess distortion probability
${\mathbb P}\Big\{d_{0,n}(x^n,y^n)>(n+1)d\Big\}\leq\epsilon, \ \epsilon\in(0,1), \ d\geq 0,$
and transmission cost $\frac{1}{n+1} {\mathbb E}\Big\{c_{0,n}(A^n,B^{n-1})\Big\}\leq P, \  P\geq 0$.
\end{definition}

\begin{definition}(Minimum Excess Distortion)
\label{asbsc}
The minimum excess distortion achievable by a symbol-by-symbol
code $(n,d,\epsilon,P)$ is defined by
\begin{align}
&  D^o(n,\epsilon, P)\tri\inf\Big\{d:  \exists(n,d,\epsilon,P)\  \ \mbox{symbol-by- symbol code}\Big\} \nonumber
\end{align}
\end{definition}

Our definition of symbol-by-symbol code is randomized, hence it embeds  deterministic codes as a special case.

\section{Nonanticipative RDF}
\label{cnrdf}

\par The necessary conditions for transmitting
a symbol-by-symbol code (they also hold  for  memoryless sources and channels) is the following.

\begin{enumerate}
\item Computation of the RDF and that of the optimal reproduction distribution so that  probabilistic matching of the source and  channel is feasible;

\item Realization
of the optimal reproduction distribution of lossy compression with fidelity by an encoder-channel-decoder scheme, processing information causally.
\end{enumerate}
\par Therefore, to facilitate the matching we introduce the RDF.
%
 Given a source distribution $P_{X^n}(\cdot)$ and a reproduction distribution
$P_{Y^n|X^n}(\cdot|x^n)$ the average fidelity set is
\begin{align}
&{\cal Q}_{0,n}(D)\tri \Big\{{P}_{Y^n|X^n}: \nonumber  \\
&\frac{1}{n+1}\int d_{0,n}(x^n,y^n)
({P}_{Y^n|X^n}\otimes P_{X^n})(dx^n,dy^n)\leq D\Big\}.  \nonumber 
\end{align}
It is known that for stationary ergodic sources, the OPTA   is given by the RDF \cite{berger}
$R(D)=\lim_{n\rar \infty} R_{0,n}(D)$,
$R_{0,n}(D)=\inf_{{P}_{Y^n|X^n}\in {\cal Q}_{0,n}(D)}\frac{1}{n+1}I(X^n;Y^n)$,
provided the infimum is achievable. However, $R(D)$ is only known for IID
and Gaussian sources, and in generally fails to satisfy 1), 2).

Now, we  introduce the nonanticipative information RDF which by construction is realizable.
 Given a source ${P}_{X^n}(dx^n)$ and a causal conditional distribution defined by
\begin{align}
\overrightarrow{P}_{Y^n|X^n}(dy^n|x^n) \tri \otimes_{i=0}^{n}P_{Y_i|Y^{i-1},X^i}(dy_i|y^{i-1},x^i) \label{cc1}
\end{align}
we introduce the information measure
\begin{align}
I_{P_{X^n}}(X^n\rar Y^n)&\tri\mathbb{D}({\overrightarrow P}_{Y^n|X^n}\otimes P_{X^n}||{P}_{Y^n}\times P_{X^n})\nonumber\\
&\equiv\mathbb{I}_{X^n\rightarrow{Y^n}}(P_{X^n},{\overrightarrow P}_{Y^n|X^n}).\nonumber
\end{align}
Consider the fidelity set defined by
\begin{align}
\overrightarrow{\cal Q}_{0,n}(D) &\tri  \Big\{{\overrightarrow P}_{Y^n|X^n}: \frac{1}{n+1}\int_{{\cal X}_{0,n}\times {\cal Y}_{0,n}}d_{0,n}({x^n}
,{y^n})\nonumber\\
&\qquad {\overrightarrow P}_{Y^n|X^n}(dy^n|x^n)\otimes P_{X^n}(dx^n) \leq D\Big\}.\label{nafs}
\end{align}

\begin{definition}(Nonanticipative Information RDF)
\label{nardf}
Given ${\overrightarrow {\cal Q}}_{0,n}(D)$, the nonanticipative information RDF is defined by
\begin{align}
{ R}^{na}_{0,n}(D) \tri \inf_{ \overrightarrow{P}_{Y^n|X^n}\in   {\overrightarrow {\cal Q}}_{0,n}(D)} \frac{1}{n+1} \mathbb{I}_{X^n\rightarrow{Y^n}}(P_{X^n},{\overrightarrow P}_{Y^n|X^n})\label{nardf11}
\end{align}
and its rate by ${ R}^{na}(D)=\lim_{n\rar\infty}{ R}^{na}_{0,n}(D)$
provided infimum and the limit exist.
\end{definition}

Clearly, if the minimum of ${ R}^{na}_{0,n}(D)$ exists the  optimal reproduction distribution
is nonanticipative, and hence realizable.\\
It can be shown that $R_{0,n}^{na}(D)$ is equal to  the nonanticipatory $\epsilon-$entropy introduced by Gorbunov and Pinsker in   \cite{pinsker1973}, via
\begin{align}
R^{\varepsilon}_{0,n}(D)=\mathop{\inf_{{P}_{Y^n|X^n}\in {\cal Q}_{0,n}(D)}}_{{X_{i+1}^{n}}\leftrightarrow X^i \leftrightarrow Y^i, \ i=0,1,\dots,n-1
}\frac{1}{n+1}I(X^n;Y^n) \label{rdfon}
\end{align}



The MC in (\ref{rdfon}) implies that  the reproduction distribution which minimizes
(\ref{rdfon}) can be realized via an encoder-channel-decoder, using nonanticipative operations (causal).




\label{ord}


Under the conditions in  \cite{pinsker1973}, or assuming the solution of $R_{0,n}^{na}(D)$  is stationary, which implies ${\overrightarrow P}_{Y^n|X^n}(d{y}^n|x^n)$ is a stationary conditional distribution, we have the following theorem \cite{pscc}.

\begin{theorem}
\label{maintheo}
Suppose there exist an interior point of the fidelity set, and the optimal reproduction is stationary.  Then the infimum over $\overrightarrow{\cal Q}_{0,n}(D)$ in (\ref{nardf11}) is
attained by
\begin{align}
\overrightarrow{P}_{Y^n|X^n}^*(dy^n|x^n)=\otimes_{i=0}^{n}\frac{e^{s\rho(T^i{x}^n,T^i{ y}^{n})}
P_{Y_i|Y^{i-1}}^*(d{y}_{i}|{ y}^{i-1})}{\int_{{\cal  Y}_i}
e^{s\rho(T^i{x}^n,T^i{ y}^{n})}P_{Y_i|Y^{i-1}}^*(d{ y}_{i}|{ y}^{i-1})}  \label{crdoo}
\end{align}
where $s\leq 0$ is the Lagrange multiplier associated with the constraint which is satisfied with equality, and
\begin{align}
{ R}^{na}_{0,n}(D)=& sD - \frac{1}{n+1} \sum_{i=0}^{n}\int_{{\cal X}_{0,i}\times{{\cal Y}}_{0,i-1}}
\log\Big(\int_{{{\cal Y}}_{i}}e^{s\rho(T^i{x}^n, T^i{ y}^{n})}\nonumber\\
& P_{Y_i|Y^{i-1}}^{*}(d{y}_{i}|{ y}^{i-1})\Big) \otimes {P}_{X_i|X^{i-1}}(d{x}_i|{ x}^{i-1})\nonumber \\
&\otimes P_{X^{i-1},Y^{i-1}}^*(d{x}^{i-1},d{y}^{i-1}){\label{solrdf}}
\end{align}
where $P_{X^{i-1},Y^{i-1}}^*(\cdot,\cdot)= \overrightarrow{P}_{Y^{i-1}|X^{i-1}}^*(\cdot|\cdot)\otimes P_{X^{i-1}}(\cdot)$.
\end{theorem}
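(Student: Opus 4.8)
The plan is to set up the constrained minimization in \eqref{nardf11} as a convex optimization problem over the convex set of causal conditional distributions $\overrightarrow{P}_{Y^n|X^n}$ and to solve it by a Lagrangian argument combined with a variational (Gateaux-derivative) characterization of the optimum. First I would recall that the directed information functional $\mathbb{I}_{X^n\to Y^n}(P_{X^n},\overrightarrow{P}_{Y^n|X^n})$, for fixed source $P_{X^n}$, is convex and lower semicontinuous in the causal channel $\overrightarrow{P}_{Y^n|X^n}$, and that the fidelity constraint in \eqref{nafs} is linear in it; the existence of an interior point of $\overrightarrow{\cal Q}_{0,n}(D)$ gives a Slater condition, so strong duality holds and there is a Lagrange multiplier $s\le 0$ with the average-distortion constraint active at the optimum. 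Introducing the Lagrangian
\begin{align}
L(\overrightarrow{P}_{Y^n|X^n},s)=\mathbb{I}_{X^n\to Y^n}-s(n+1)\Big(D-\tfrac{1}{n+1}\textstyle\int d_{0,n}\,d(\overrightarrow{P}_{Y^n|X^n}\otimes P_{X^n})\Big),\nonumber
\end{align}
the task reduces to minimizing $L$ over all causal $\overrightarrow{P}_{Y^n|X^n}$ with no constraint.

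The key structural step is to exploit the chain-rule decomposition of directed information: writing $\mathbb{I}_{X^n\to Y^n}=\sum_{i=0}^n I(X^i;Y_i\mid Y^{i-1})$ and using the additive distortion $d_{0,n}=\sum_{i=0}^n\rho(T^ix^n,T^iy^n)$, the Lagrangian splits into a sum of per-stage terms, each depending on the stage-$i$ kernel $P_{Y_i|Y^{i-1},X^i}$ and on the induced marginal $P_{Y_i|Y^{i-1}}$. I would then perform the minimization stage by stage (this is legitimate because the causal factorization makes the feasible set a product of simplices of kernels, once one argues — using stationarity or the nested-marginal structure — that the stage-$i$ problem can be decoupled from later stages). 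At each stage, for fixed $P_{Y^{i-1},X^{i-1}}$, one takes the Gateaux derivative of the stage-$i$ objective with respect to $P_{Y_i|Y^{i-1},X^i}$, sets it to zero, and recognizes that the stationary point is the familiar tilted (Gibbs) measure $\propto e^{s\rho(T^ix^n,T^iy^n)}P^*_{Y_i|Y^{i-1}}(dy_i|y^{i-1})$; the normalizing factor is exactly the inner integral appearing in \eqref{crdoo}. This is the classical Arimoto–Blahut / Shannon lower-bound form, now carried out conditionally on the past. Substituting this optimal kernel back into $L$, using that the distortion term contributes $sD$ at equality and that the mutual-information term telescopes into $-\frac{1}{n+1}\sum_i \mathbb{E}[\log(\text{normalizer})]$, yields the closed-form value \eqref{solrdf}, with the expectation taken under the optimal joint $P^*_{X^{i-1},Y^{i-1}}=\overrightarrow{P}^*_{Y^{i-1}|X^{i-1}}\otimes P_{X^{i-1}}$.

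The main obstacle, and the place where the stationarity hypothesis does real work, is justifying the stage-wise decoupling: a perturbation of $P_{Y_i|Y^{i-1},X^i}$ changes the marginals $P_{Y_j|Y^{j-1}}$ for all $j>i$, so the per-stage problems are not a priori independent, and one must show that the coupling terms vanish at the optimum (equivalently, that the optimal $P^*_{Y_i|Y^{i-1}}$ is a genuine fixed point of the induced map, analogous to the self-consistency condition in the Blahut algorithm). I expect to handle this either by invoking the hypotheses of Gorbunov–Pinsker \cite{pinsker1973} directly, or, under the stated assumption that the optimal reproduction is stationary — so that $\overrightarrow{P}^*_{Y^n|X^n}$ is an $(n+1)$-fold convolution of a single stationary kernel — by reducing to a single-letter fixed-point equation and verifying that the candidate \eqref{crdoo} satisfies the full Karush–Kuhn–Tucker/variational inequality for the joint problem, not merely the stage-$i$ relaxation. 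A secondary technical point is the measure-theoretic care needed on Polish spaces — existence and measurability of the regular conditional distributions, finiteness of the normalizers, and differentiability of the relative-entropy functional — which I would dispatch by standard lower-semicontinuity and dominated-convergence arguments, the interior-point assumption guaranteeing the multiplier $s$ is finite and the normalizers are bounded away from $0$ and $\infty$ on the relevant supports.
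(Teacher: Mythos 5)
The paper does not actually prove Theorem~\ref{maintheo} in-line: its ``proof'' is a one-line deferral to \cite{charalambous-stavrou-ahmed2013}. Your plan follows the same route taken in that reference --- convexity of $\mathbb{I}_{X^n\to Y^n}$ in the causal kernel for fixed $P_{X^n}$, a Slater/interior-point argument to get a finite multiplier $s\le 0$ with the distortion constraint active, a Gateaux-derivative stationarity condition yielding the tilted kernel, and back-substitution to get \eqref{solrdf} --- so in spirit you are reconstructing the intended derivation rather than inventing a different one. Your bookkeeping for \eqref{solrdf} is also right: substituting the optimal kernel makes the log-likelihood ratio equal to $s\rho(T^ix^n,T^iy^n)$ minus the log-normalizer, the distortion term contributes $sD$ at equality, and the remainder is the expected log-normalizer under $P^*_{X^{i-1},Y^{i-1}}$.

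The genuine gap is the one you yourself flag and then postpone: the self-consistency of $P^*_{Y_i|Y^{i-1}}$. The expression \eqref{crdoo} is not an explicit solution but an implicit fixed-point characterization --- the marginal $P^*_{Y_i|Y^{i-1}}$ appearing in the tilted kernel must be the marginal \emph{induced} by that same kernel composed with the source, and a perturbation of the stage-$i$ kernel moves every later marginal $P_{Y_j|Y^{j-1}}$, $j>i$, so the per-stage Gateaux conditions you write down are not obviously equivalent to stationarity of the full Lagrangian. Saying you would ``invoke Gorbunov--Pinsker'' or ``verify the full KKT conditions'' names the obstacle without removing it; the standard resolution is to first prove the variational lower bound $I(X^i;Y_i|Y^{i-1})\ge \int s\rho\, dP - \mathbb{E}\log(\mathrm{normalizer})$ for an \emph{arbitrary} output marginal (the conditional Shannon-lower-bound step), sum over $i$, and then show the bound is tight exactly when the marginal is the induced one --- this sidesteps the coupling issue rather than differentiating through it. Without that step, or an explicit verification that the coupling terms vanish at the candidate point, your argument establishes only that \eqref{crdoo} is stationary for a relaxed, stage-decoupled problem, not that it attains the infimum in \eqref{nardf11}. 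A second, smaller gap: you assert attainment of the infimum from lower semicontinuity but never exhibit a compactness argument for the set of causal kernels on Polish spaces (weak-$*$ compactness of $\overrightarrow{\cal Q}_{0,n}(D)$ needs a tightness hypothesis that neither the theorem statement nor your sketch supplies).
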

\begin{proof}  The derivation is given in \cite{charalambous-stavrou-ahmed2013}.
\end{proof}
Clearly,  (\ref{crdoo})  is nonanticipative, and  as we show in the next section, easy to compute, even for sources with memory.
\section{Coding Theorem}
\label{coding}
 In this section we show achievability of symbol-by-symbol code. First, we define the probabilistic realization of optimal reproduction distribution.


\begin{definition}\label{realdef}
(Realization)
Given a source $\{P_{X_i|X^{i-1}}$ $(d{x}_i|x^{i-1}): \forall i \in {\mathbb N}^n\}$, a general channel
$\{P_{B_i|B^{i-1},A^i,X^i}$ $(d{b}_i|b^{i-1},a^i,x^i): \forall i \in {\mathbb N}^n\}$  is a realization
of the optimal reproduction distribution $\{P_{Y_i|Y^{i-1},X^i}^*(d{y}_i|y^{i-1},x^i): \forall i \in {\mathbb N}^n\}$
of theorem \ref{maintheo}, if there exists a pre-channel encoder
$\{P_{A_i|A^{i-1},B^{i-1},X^i}$ $(d{a}_i|a^{i-1},b^{i-1},x^i): \forall i \in {\mathbb N}^n\}$ and a post-channel
decoder $\{P_{Y_i|Y^{i-1},B^{i}}$ $(d{y}_i|y^{i-1},b^{i}): \forall i \in {\mathbb N}^n\}$ such that
\begin{align}
{\overrightarrow P}^*_{Y^n|X^n}(d{y}^n|x^n)&=\otimes_{i=0}^n{ P}^*_{Y_i|Y^{i-1},X^i}
(d{y}_i|y^{i-1},x^i)\nonumber\\
&=\otimes_{i=0}^n{ P}_{Y_i|Y^{i-1},X^i}
(d{y}_i|y^{i-1},x^i)\label{scmrd}
\end{align}
where the joint distribution from which (\ref{scmrd}) is obtained  is given precisely by (\ref{joint}).
Moreover we say that ${ R}^{na}_{0,n}(D)$ is realizable if in addition the realization
operates with average distortion $D$ and $I_{P_{X^n}}(P_{X^n},\overrightarrow{P}_{Y^n|X^n})={ R}^{na}_{0,n}(D)$
\end{definition}

If the optimal reproduction distribution is realizable (see  Definition~\ref{realdef}), then  the data processing inequality holds:
\begin{align}
I_{X^n\rar Y^n}(P_{X^n},{\overrightarrow P}_{Y^n|X^n})\leq I(X^n\rar B^n), \ \forall n \in{\mathbb{N}}. \label{dpi}
\end{align}
If ${ R}^{na}_{0,n}(D)$ is realizable according to Definition \ref{realdef},
then the source is not necessarily matched to the channel. Next, we  prove (under certain conditions) achievability.\\
Consider the following average cost set defined by
\begin{align}
{\cal P}_{0,n}(P)\tri\Big\{(X^n,A^n):\frac{1}{n+1}{\mathbb E}\{c_{0,n}(A^n,B^{n-1})\}\leq P\Big\}. \nonumber
\end{align}
Since we consider the general scenario that (\ref{mc1})-(\ref{mc4}) hold, then we define the information channel capacity as follows \cite{cover-pombra1989}.
\begin{align}
C_{0,n}(P)\tri\sup_{(X^n,A^n)\in{\cal P}_{0,n}(P)}\frac{1}{n+1}I(X^n\rar B^n)\nonumber
\end{align}
and its rate (provided $\sup$ is finite and the limit exists) by
$C(P)=\lim_{n\rar\infty}C_{0,n}(P)$.

Next, we prove achievability of a symbol-by-symbol code.

\begin{theorem}\label{ach}
(Achievability of Symbol-by-Symbol Code). Suppose the following conditions hold.

{\bf (1)} ${ R}^{na}_{0,n}(D)$ has a solution, and the optimal reproduction distribution is stationary of the form $\{P_{Y_i| Y^{i-1}, X^i}: \forall i=0, 1, \ldots, n\}$;

{\bf (2)}  $C_{0,n}(P)$ has a solution, the maximizing processes are stationary, and the encoder is of the form $\{P_{A_i| A^{i-1}, X^i}: \forall i=0, 1, \ldots, n\}$;

{\bf (3) } The optimal reproduction distribution $\overrightarrow{P}_{Y^n|X^n}(dy^n|x^n)$ given by Theorem \ref{maintheo} is realizable, and ${ R}^{na}_{0,n}(D)$ is also realizable.

{\bf (4)} For a given $D$ there exists a $P$ such that ${R}^{na}(D)=C(P)$.
\begin{align}
\mbox{If} \hst \mathbb{P}\Big\{\sum_{i=0}^{n}{\rho}_{0,i}(T^i{X^n},T^i{Y^n})>(n+1)d\Big\}\leq\epsilon \label{edp}
\end{align}
where ${\mathbb P}$ is taken with respect to $P_{Y^n,X^n}(d{y}^n,d{x}^n)={\overrightarrow P}^*_{Y^n|X^n}(d{y}^n|x^n)\otimes P_{X^n}(d{x}^n)$
induced by matching, then there exists an $(n,d,\epsilon,P)$ symbol-by-symbol code with memory without anticipation.
\end{theorem}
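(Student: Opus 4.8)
The plan is to take the encoder--channel--decoder realization guaranteed by hypothesis \textbf{(3)} and check directly that its pre-channel encoder and post-channel decoder meet the two requirements in Definition~\ref{sbsc}: the excess distortion bound and the transmission-cost bound. Nothing has to be ``constructed'' beyond what \textbf{(3)} already supplies; the work lies entirely in verifying the two constraints, and for the cost constraint in exploiting the matching hypothesis \textbf{(4)}.

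First I would invoke hypothesis \textbf{(3)}: there exist an encoder $\{P_{A_i|A^{i-1},B^{i-1},X^i}\}$, the given channel $\{P_{B_i|B^{i-1},A^i,X^i}\}$, and a decoder $\{P_{Y_i|Y^{i-1},B^i}\}$ whose induced joint law is exactly (\ref{joint}), for which $\otimes_{i=0}^{n}P_{Y_i|Y^{i-1},X^i}=\overrightarrow{P}^{*}_{Y^n|X^n}$, and which operates with average distortion $D$ and $I_{P_{X^n}}(P_{X^n},\overrightarrow{P}^{*}_{Y^n|X^n})=R^{na}_{0,n}(D)$. Since the $(X^n,Y^n)$-marginal of (\ref{joint}) is precisely $\overrightarrow{P}^{*}_{Y^n|X^n}\otimes P_{X^n}$, the excess distortion probability produced by this encoder/decoder pair coincides with the left-hand side of (\ref{edp}), hence is $\le\epsilon$ by assumption. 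This disposes of the distortion requirement.

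Next I would verify the transmission-cost requirement, which is where hypotheses \textbf{(2)} and \textbf{(4)} come in. Realizability of $R^{na}_{0,n}(D)$ (last part of \textbf{(3)}) makes the data-processing inequality (\ref{dpi}) available, so
\[
R^{na}_{0,n}(D)\;=\;I_{P_{X^n}}(P_{X^n},\overrightarrow{P}^{*}_{Y^n|X^n})\;\le\;\frac{1}{n+1}\,I(X^n\rar B^n),\qquad R^{na}_{0,n}(D)\;=\;C_{0,n}(P),
\]
the second equality being the matching hypothesis \textbf{(4)}; hence $\frac{1}{n+1}I(X^n\rar B^n)\ge C_{0,n}(P)$. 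On the other hand, by \textbf{(2)} the quantity $C_{0,n}(P)$ is the \emph{attained} supremum of $\frac{1}{n+1}I(X^n\rar B^n)$ over stationary pairs $(X^n,A^n)\in{\cal P}_{0,n}(P)$. Choosing the realization of \textbf{(3)} so that its channel-input process is this capacity-achieving one --- which is consistent, because the two displayed relations then squeeze $\frac{1}{n+1}I(X^n\rar B^n)$ to the common value $R^{na}_{0,n}(D)=C_{0,n}(P)$ --- places $(X^n,A^n)$ in ${\cal P}_{0,n}(P)$, i.e. $\frac{1}{n+1}{\mathbb E}\{c_{0,n}(A^n,B^{n-1})\}\le P$. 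With both constraints verified, the encoder/decoder pair of this realization is an $(n,d,\epsilon,P)$ symbol-by-symbol code with memory without anticipation, which is the claim.

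I expect the real obstacle to be exactly the compatibility step inside the cost argument: showing that one and the same joint law (\ref{joint}) can simultaneously realize the optimal reproduction $\overrightarrow{P}^{*}_{Y^n|X^n}$ in the sense of Definition~\ref{realdef} \emph{and} carry the capacity-achieving, cost-feasible channel input. This is the genuine content of ``probabilistic matching'' and has to be teased out of the interplay of \textbf{(1)}--\textbf{(4)} --- stationarity of both optimizers, equality of the two optimal rates, and the resulting tightness of (\ref{dpi}) under the realization. By contrast, the remaining ingredients --- reading the excess distortion probability off the marginal of (\ref{joint}) and invoking (\ref{dpi}) --- are essentially bookkeeping.
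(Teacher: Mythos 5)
Your proposal is correct and follows essentially the same route as the paper's own proof, which is a two-sentence sketch citing \cite{gastpar2003}: conditions \textbf{(1)} and \textbf{(3)} supply the realization achieving ${R}^{na}_{0,n}(D)$, and condition \textbf{(4)} provides the matching so that the realized $(X^n,Y^n)$-marginal is exactly the measure in (\ref{edp}). You are in fact more explicit than the paper on the transmission-cost constraint (via (\ref{dpi}) and the capacity-achieving input from \textbf{(2)}), and the compatibility issue you flag at the end is real but is left equally implicit in the paper.
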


\begin{proof}
The derivation is similar to \cite{gastpar2003}. If conditions (1), (3) hold then the optimal reproduction distribution is realizable, and this realization achieves ${ R}^{na}_{0,n}(D)$. By (4) the source is matched to the channel so that the excess distortion probability of a symbol-by-symbol code  with memory without anticipation satisfies (\ref{edp}).
\end{proof}

\subsection{\bf Existence of Symbol-by-Symbol Codes} Next, we give sufficient conditions so that the conditions of Theorem~\ref{ach}, {\bf (1), (2)} hold, i.e., establishing  existence of a symbol-by symbol encoder $\{P_{A_i| A^{i-1}, B^{i-1}, X^i}: i=0, 1, \ldots, n\}$. Suppose the following conditions hold.
\begin{itemize}
\item[\bf (A1)] $\rho_{0,i}(T^ix^n, T^i y^n)= \rho_{0,i}(x_i, T^i y^n), \forall i \in {\mathbb N}^n$;

\item[\bf (A2)]  $P_{X_i|X^{i-1}}(x_i|x^{i-1})=P_{X_i|X_{i-1}}(x_i|x_{i-1}), \ \forall i\in\mathbb{N}^n$;

\item[\bf (A3)] $P_{B_i|B^{i-1},A^{i},X^i}(d{b}_i|b^{i-1},a^{i},x^i) \\=P_{B_i|B^{i-1},A_{i},X_i}(d{b}_i|b^{i-1},a_{i},x_i), \ \forall i\in\mathbb{N}^n$.
\end{itemize}
If {\bf (A1)} holds, then by Theorem~\ref{maintheo} the optimal stationary reproduction distribution is $P_{Y_i| Y^{i-1}, X^i}^*=P_{Y_i| Y^{i-1}, X_i}^*, \forall i \in {\mathbb N}^n$, and  hence the form of the optimal reproduction distribution in  Theorem~\ref{ach}, {\bf (1)} holds. Moreover, if {\bf (A2), (A3)} hold,
then
 maximizing directed information $I(X^{n}\rightarrow  {B}^{n})$  over non-Markov encoders $\{P_{A_i|A^{i-1},B^{i-1},X^i}: i=0,1,\ldots,n\}$ is equivalent to maximizing it over encoders $\{\overline{P}_{A_i|B^{i-1},X_i}: i=0,1,\ldots,n\}$, and similarly,  maximizing $I(X^{n}\rightarrow  {B}^{n})$  over non-Markov deterministic encoders $\{e_i(x^i,a^{i-1},b^{i-1}): i=1, \ldots, n\}$ is equivalent to the maximization with respect to encoders $\{g_i(x_i,b^{i-1}): i=1, \ldots, n\}$. This result appeared in \cite{charalambous-kourtellaris-hadjicostis} and is calculated using dynamic programming. Hence, the form of the encoder in Theorem~\ref{ach}, {\bf (2)} holds. Thus, based on these two conditions the encoder is symbol-by-symbol Markov with respect to the source, and nothing can be gained by considering an encoder that depends on the entire past of the source causally.

\section{Symbol-by-Symbol JSCC of a Binary Symmetric Markov source
via a Binary State Symmetric Channel}\label{exa}
In this section we provide a noisy coding theorem for a Binary Symmetric Markov Source
with crossover probability $p$, $BSMS(p)$. This is achieved by symbol-by-symbol joint
source channel matching of the current source via a Binary State Symmetric Channel $BSSC(\alpha_1,\beta_1)$
with an average cost constraint. First, we give the expression of the nonanticipative
reproduction distribution which  achieves the infimum in (\ref{nardf11}).
Next, we give the capacity expression of the $BSSC(\alpha_1,\beta_1)$ and the
optimal input distributions without feedback that achieve it. For this channel
feedback does not increase the capacity. Then, by merging these results we show
achievability of symbol-by-symbol code such that, $R^{na}(D)=C(\kappa)$.
\subsection{Results on $BSMS(p)$ and $BSSC(\alpha_1,\beta_1)$}
Consider a Binary Symmetric Markov Source. $BSMS(p)$,
$P_{X_i|X_{i-1}}(0|0)=P_{X_i|X_{i-1}}(1|1)=1-p$ and $P_{X_i|X_{i-1}}(1|0)=P_{X_i|X_{i-1}}(0|1)=p$ and
$i\in{\mathbb N}^n$ and Hamming distortion criterion $\rho(x,y)=0$ if $x=y$ and $\rho(x,y)=1$ if $x \neq y$.
\begin{theorem}\label{marex1} For a BSMS(p) and single letter Hamming distortion criterion
${ R}^{na}(D)$ is given by
\[ { R}^{na}(D) = \left\{ \begin{array}{ll}
         H(p)-mH(\alpha)-(1-m)H(\beta) & \mbox{if $D \leq \frac{1}{2}$}\\
        0 & \mbox{otherwise}\end{array}  \right. \]
$m=1-p-D+2pD$, $\alpha=\frac{(1-p)(1-D)}{1-p-D+2pD}$, $\beta=\frac{p(1-D)}{p+D-2pD}$.
\end{theorem}
\begin{proof}
We describe the main steps. The steady state
distribution of the source is $P(X_i=0)=P(X_i=1)=0.5$ and the reproduction distribution is
\begin{align}
P_{Y_i|X^i,Y^{i-1}}^*=P_{Y_i|X_i,Y^{i-1}}^*=
\frac{e^{s\rho(x_i,y_i)}P(y_i|y^{i-1})}{\sum_{y_i}e^{s{\rho}(x_i,y_i)}P(y_i|y^{i-1})}\nonumber
\end{align}
and we can show that  $P_{Y_i|X_i,Y^{i-1}}^*=P_{Y_i|X_i,Y_{i-1}}^*$ and that
\begin{align}
P_{Y_i|X_i,Y_{i-1}}^*(y_i|x_i,y_{i-1})=\bbordermatrix{~ & 0,0 & 0,1 & 1,0 & 1,1 \cr
                  0 & \alpha & \beta& 1-\beta & 1-\alpha\vspace{0.3cm} \cr
                  1 & 1-\alpha & 1-\beta& \beta &  \alpha \cr}\nonumber
\end{align}
\par Using the stationary distributions $P_{Y_i|X_i,Y_{i-1}}^*$ and
$P_{X_i|X_{i-1}}$, we obtain $R^{na}(D)$.
\end{proof}
\par To perform the matching on the source to the channel we use the Binary State Symmetric Channel $BSSC(\alpha_1,\beta_1)$
defined by
\begin{IEEEeqnarray}{l}
 P_{B_i|A_i,B_{i{-}1}}(b_i|a_i,b_{i{-}1}) {=} \bbordermatrix{~ & 0,0 &\hspace{-0.15cm} 0,1 &\hspace{-0.15cm} 1,0 &\hspace{-0.15cm} 1,1 \cr
                  0 & \alpha_1 &\hspace{-0.15cm} \beta_1 &\hspace{-0.15cm} 1{-}\beta_1 &\hspace{-0.15cm} 1{-}\alpha_1 \cr
                  1 & 1{-}\alpha_1 &\hspace{-0.15cm} 1{-}\beta_1  &\hspace{-0.15cm} \beta_1 &\hspace{-0.15cm}  \alpha_1 \cr}.
                  \label{gench}\IEEEeqnarraynumspace
\end{IEEEeqnarray}
\par The form of the channel \ref{gench} is motivated by the form of the $P_{Y_i|X_i,Y_{i-1}}^*$
(as in the IID Bernoulli source is matched via a binary symmetric channel).
The state of the channel is defined as the modulo2 addition of the current
input and previous output symbol, $s_i=a_i\oplus b_{i-1}$. Then we may
transform the channel to its equivalent form defined by
$P_{B_i|A_i,S_{i}}(b_i|a_i,s_{i})$. This channel is called binary state symmetric channel,
since given the state the channel it is binary symmetric. We introduce a cost constraint on the channel
that has the following physical interpretation. Assume $\alpha_1>\beta_1\geq 0.5$. Then the capacity
of the state zero channel $(1-H(\alpha_1))$, is greater than
the capacity of the state one channel $(1-H(\beta_1))$. With``abuse" of terminology,
we interpret the $(BSC(1-\alpha_1))$ as the ``good channel" and the $(BSC(1-\beta_1))$
as the bad channel. It is further reasonable to assume that the we pay a larger
fee to use the ``good channel" and a smaller fee to use the ``bad channel". We quantify
this policy by assigning a binary pay off to each of the channels. Hence, we assign
a cost equal to $1$ for the good channel, and a cost equal to $0$ for the bad channel,
defined by
\[ c(a_i,b_{i-1}) \tri \left\{
  \begin{array}{l l}
    1 & \quad \text{if $a_i=b_{i-1}$, or $s_i=0$}\\
    0 & \quad \text{if $a_i\neq=b_{i-1}$, or $s_i=1$ }
  \end{array} \right.\]
hence the average cost constraint is
\bes
{\mathbb E}\{c(a_i,b_{i-1})\}
=P_{A_i,B_{i-1}}(0,0)+P_{A_i,B_{i-1}}(1,1)=P_{S_i}(0).
\ees
\par Note that $c(a_i,b_{i-1})$ is not required to be binary and can be
easily upgraded to more complex forms. We know that for the $BSSC(\alpha_1,\beta_1)$ \cite{asnani13}
feedback does not increase the capacity.
The definition of the constrained capacity without feedback is defined by
\bea
C_{fb}(k)=\lim_{n\rar\infty}\max_{{P}_{X^n}:\sum_{i=0}^{n}
\frac{1}{n+1}{\mathbb E}\{\sum_{i=0}^{n}{c}_{0,i}(x_i,y_{i-1})\}=\kappa}\nonumber\\
\frac{1}{n+1}I(X^n\rar Y^n)
\eea

\begin{proposition}
The capacity of the $BSSC(\alpha_1,\beta_1)$, with or without feedback, subject to the average cost
constrain ${\mathbb E}\{c(a_i,b_{i-1})\}=k$, where $\kappa=constant$, given by
\begin{IEEEeqnarray}{l}
C(\kappa)=H({\alpha_1}\kappa{+}(1{-}{\beta_1})(1{-}\kappa)){-}\kappa H({\alpha_1}){-}(1{-}\kappa)H({\beta_1})\IEEEeqnarraynumspace
\end{IEEEeqnarray}
The optimal input distribution without feedback is given by
\vspace{-0.2cm}
\bea
P^{*}_{A_i|A_{i-1}}(a_i|a_{i-1}) = \bbordermatrix{~ \cr
                  & \dfrac{1-\kappa-\gamma}{1-2\gamma} & \dfrac{\kappa-\gamma}{1-2\gamma}   \cr
                  & \dfrac{\kappa-\gamma}{1-2\gamma}   & \dfrac{1-\kappa-\gamma}{1-2\gamma} \cr},
                  \nonumber
\eea
where $\gamma={\alpha_1}{\kappa}+{\beta_1}({1-\kappa})$.
\end{proposition}
\textit{Proof:} see \cite{cbssc}.
\subsection{Symbol-By-Symbol Joint Source Channel Matching}
Recall that symbol-by-symbol joint source channel matching is achievable
if ${ R}^{na}(D)=C(\kappa)$ and if there exists an encoder decoder scheme
for $d\geq D$, such that
\begin{align}
\mathbb{P}\Big\{\sum_{i=0}^{n}{\rho}_{0,i}(T^i{X^n},T^i{Y^n})>(n+1)d\Big\}\leq\epsilon \label{edp}
\end{align}
By setting $\kappa=m$, $\alpha_1=\alpha$, $\beta_1=\beta$, then
$\frac{1-\kappa-\gamma}{1-2\gamma}=p$,
\beae
C(\kappa)&=&H({\beta}_{1}(1$-$\kappa)$+$(1$-${\alpha}_{1})\kappa)$-$\kappa{H({\alpha}_1)}$-$(1$-$\kappa)H(\beta_1)\nonumber\\
&=&H({\beta}(1$-$m)$+$(1$-${\alpha})m)$-$m{H({\alpha})}$-$(1$-$m)H(\beta)\nonumber\\
&=&H(p)$-$m{H({\alpha})}$-$(1$-$m)H(\beta)=R^{na}(D)\nonumber
\eeae
Moreover, the optimal input distribution is given by
\bea
P^{*}_{A_i|A_{i-1}}(a_i|a_{i-1}) = \bbordermatrix{~\cr
                  & p & 1-p   \cr
                  & 1-p   & p \cr},
                  \label{oidis}
\eea
\par Since the optimal input distribution is identical to the probability
distribution of the source, then no encoder is required.
Next, we check whether the average distortion is satisfied in the absence of a decoder.
The average distortion between the source symbols and the reproduction symbols, $\Delta$, is equal to
\beae
\Delta&=&{\mathbb E}[d(X_i,Y_i)]\nonumber\\
&=&{\mathbb E}[d(A_i,B_i)]\nonumber\\
&=&\sum_{A_i,B_i,B_{i-1}}d(A_i,B_i)P_{B_i|A_i,B_{i-1}}(b_i|a_i,b_{i-1})\nonumber\\
&&P_{A_i|B_{i-1}}(a_i|b_{i-1})P_{B_{i-1}}(b_{i-1})\nonumber\\
&=&(1-\beta)(1-m)+(1-\alpha)m=D\nonumber
\eeae
\par Thus, we established source channel matching of
a $BSMS(p)$ with Hamming fidelity constraint over a $BSSC(\alpha_1,\beta_1)$
subject to cost constraint, in the spirit of \cite{gastpar2003}.
A  realization of the described scheme is illustrated in Fig.~\ref{figmarex1}, where it is shown that
as the number of channel uses $n$ is increased, the single letter distortion between the source symbol sequence and
the reproduction sequence converges to the average distortion $D$.

\begin{figure}
\begin{center}
\includegraphics[bb= -10 33 400 310,scale=0.52]{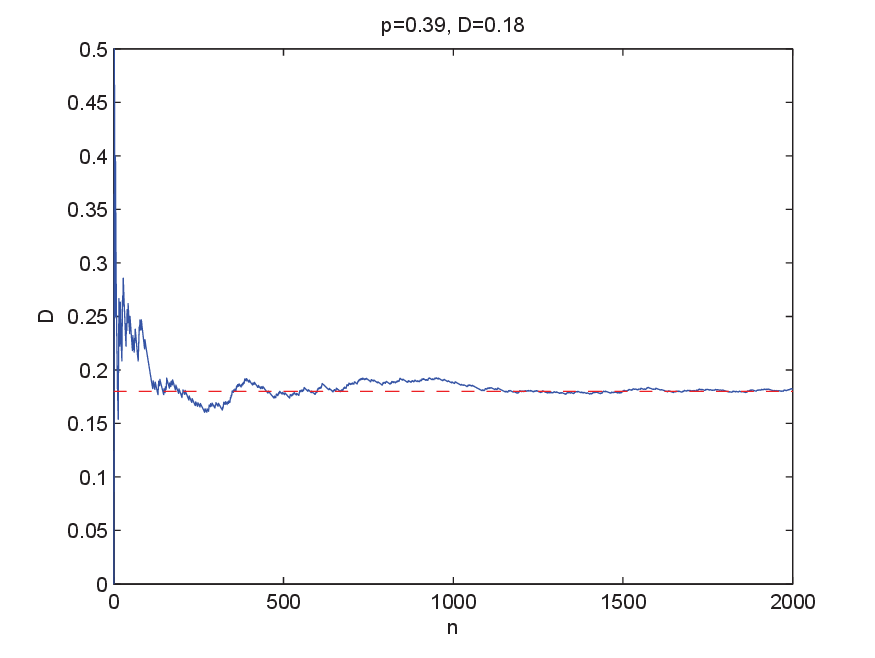}
\caption{The distortion between the source and reproduction symbols for a random realization of the source, as a function of $n$
using the optimal reproduction distribution as the channel and uncoded transmission.}
 \label{figmarex1}
\end{center}
\end{figure}

%

Next, we bound the excess distortion probability of Theorem \ref{ach}, by applying an extension of Hoeffding's inequality for MCs \cite{glynn2002}, to the
Markov process $\{Z_i \tri (Y_i,X_i): \forall i \in {\mathbb N}  \}$ (this is easily shown to hold). Set ${\rho}(x,y)=x\oplus y$ and let $S_n \tri \sum_{i=0}^{n}{\rho}(X_i,Y_i)$.
Let $d\tri \delta+ \frac{{\mathbb E}[S_n]}{n+1}, \delta >0$.
By Hoeffding's inequality \cite{glynn2002}, the excess distortion probability  is bounded by
\bea
P\Big\{S_n   > (n+1) d \Big\}\leq \exp\Big(-\frac{{\lambda}^2
((n+1)\delta -2\|f\|m/{\lambda})^2}{2(n+1){\|f\|}^2m^2}\Big)\nonumber
\eea
where ${\|f\|}\tri\sup\{y_i:i=0,1,\dots\}=1$, $m=1$,
$\lambda=\min\{p,1-p\}\min\{\alpha,\beta,1-\alpha,1-\beta\}$, for $n>2{\|f\|}m/(\lambda\delta)$.
This bound is illustrated in Fig.~\ref{figmarex2}.  Although, this bound is not tight and holds for
$n$ large enough, it shows the achievability of
Markov sources via uncoded transmission. It might be possible to compute the excess distortion probability in closed form to get tighter bounds.
\begin{figure}
\begin{center}
\centering
\includegraphics[bb= -10 33 400 310,scale=0.52]{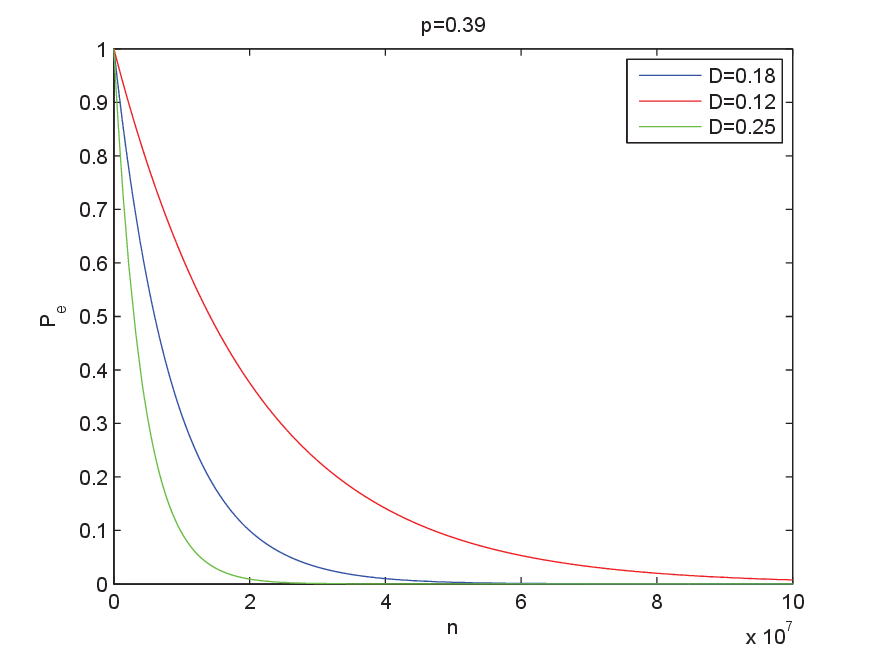}
\caption{Excess Probability of Distortion for $\delta=0.01$.}
 \label{figmarex2}
\end{center}
\end{figure}

\section{Conclusions}
This paper discusses  General Source-Channel Matching for symbol-by-symbol.
Using the nonanticipative RDF it is shows achievability of a symbol-by-symbol
code with respect to average and excess distortion probability. Then it considers
the $BSMS(p)$, it computes the nonanticipative RDF with respect
to Hamming distortion, and shows that is is matched, uncoded, over a
$BSSC(\alpha_1,\beta_1)$ subject to cost constraint but without feedback.



%



\label{Bibliography}
\bibliographystyle{IEEEtran}
\bibliography{Bibliography}
%
%
%
%
%
%
%

\end{document}